\newcommand{\spa}[1]{\mathcal{#1}}
\newcommand{\commentout}[1]{}
\newcommand{\defeq}{\stackrel{\smash{\textnormal{\tiny def}}}{=}}
\newcommand{\eps}{\varepsilon}
\newtheorem{theorem}{Theorem}
\newtheorem{corollary}[theorem]{Corollary}
\newtheorem{proposition}[theorem]{Proposition}
\theoremstyle{definition}
\newtheorem{defn}[theorem]{Definition}
\newtheorem{protocol}[theorem]{Protocol}
\newcommand{\pa}[1]{(#1)}
\newcommand{\Pa}[1]{\left(#1\right)}
\newcommand{\set}[1]{\{#1\}}
\newcommand{\bra}[1]{\langle#1|}
\newcommand{\ket}[1]{|#1\rangle}
\newcommand{\kb}[1]{\ket{#1} \bra{#1}}
\newcommand{\altketbra}[1]{\kb{#1}}
\newcommand{\ketbra}[2]{\ket{#1} \bra{#2}}
\DeclareMathOperator{\trace}{Tr}
\newcommand{\ptr}[2]{\trace_{#1}\pa{#2}}
\newcommand{\Ptr}[2]{\trace_{#1}\Pa{#2}}
\newcommand{\tinyspace}{\mspace{1mu}}
\newcommand{\abs}[1]{|\tinyspace#1\tinyspace|}
\newcommand{\norm}[1]{\lVert\tinyspace#1\tinyspace\rVert}
\newcommand{\Norm}[1]{\left\lVert\tinyspace#1\tinyspace\right\rVert}
\newcommand{\tnorm}[1]{\norm{#1}_{\trace}}
\newcommand{\Tnorm}[1]{\Norm{#1}_{\trace}}
\newcommand{\F}{\mathrm{F}}
\newcommand{\ol}[1]{{\overline{#1}}}
\def\ot{\otimes}
\def\cA{\mathcal{A}}
\def\cB{\mathcal{B}}
\def\cC{\mathcal{C}}
\def\cM{\mathcal{M}}
\def\cX{\mathcal{X}}
\def\cY{\mathcal{Y}}
\newcommand{\half}{\frac{1}{2}}
\newcommand{\zo}{\{ 0, 1 \}}
\newcommand{\PB}{P_{\mathrm{Bob}}^{\star}}
\newcommand{\PA}{P_{\mathrm{Alice}}^{\star}}
\newcommand{\cont}[1]{\mathrm{controlled\textrm{-}}#1}
\begin{document}

\title{
Optimal bounds for semi-honest quantum oblivious transfer
}

\author{
  Andr\'e Chailloux$^*$ $\qquad$ Gus Gutoski$^\dagger$ $\qquad$ Jamie Sikora$^\ddagger$ \\[4mm]
  {\small\it
  \begin{tabular}{c}
    {\large$^*$}INRIA Paris-Rocquencourt, SECRET Project-Team,
78153 Le Chesnay Cedex, France \\[1mm]
    {\large$^\dagger$}Perimeter Institute for Theoretical Physics, Waterloo, Ontario, Canada \\[1mm]
    {\large$^\ddagger$}Laboratoire d'Informatique Algorithmique: Fondements et Applications, Universit\'e Paris Diderot,
    Paris, France
  \end{tabular}
  }
}

\date{August 30, 2016} 

\maketitle
 
\begin{abstract}
  Oblivious transfer is a fundamental cryptographic primitive in which Bob transfers one of two bits to Alice in such a way that Bob cannot know which of the two bits Alice has learned.
  We present an optimal security bound for quantum oblivious transfer protocols, in the information theoretic setting, under a natural and {arguably} demanding definition of what it means for Alice to cheat.
  Our lower bound is a smooth tradeoff between the probability $\PB$ with which Bob can guess Alice's bit choice and the probability $\PA$ with which Alice can guess both of Bob's bits given that she learns one of the bits with certainty.
  We prove that $2 \PB + \PA \geq 2$ in any quantum protocol for oblivious transfer, from which it follows that one of the two parties must be able to cheat with probability at least $2/3$.
  We prove that this bound is optimal by exhibiting a family of protocols whose cheating probabilities can be made arbitrarily close to any point on the tradeoff curve.
\end{abstract}
 
\section{Introduction}
 
The rise of quantum information has rekindled interest in information theoretic cryptography---especially in fundamental two-party primitives such as coin flipping, bit commitment, and oblivious transfer.
Without quantum information, any protocol for any of these primitives is completely insecure against a cheating party, and one must assume a bounded adversary in order to realize these primitives with nonzero security.
With quantum information, however, initial results assert only that \emph{perfect} security cannot be achieved \cite{Mayers97,LoChau97,LoChau97a,Lo97,BuhrmanCS12}, leaving a wide range of possibilities for imperfect unconditional security of these primitives.

Interest has therefore concentrated on quantifying the level of security for these primitives that can be achieved by quantum information.
Many non-trivial results followed, and significant insight into the nature and benefits of quantum information has been gained from these studies.
Optimal security bounds are now known for both coin flipping  \cite{Kitaev02,Mochon07,ChaillouxK09} and bit commitment \cite{ChaillouxK11}, but the security of quantum protocols for oblivious transfer has remained an open question.

It is a fascinating fact that different primitives have different security bounds, as each new bound we learn provides another perspective on quantum information and what can be achieved with it.

\subsection{Semi-honest oblivious transfer} \label{sec:intro:weak-OT}

\emph{Oblivious transfer} is a two-party primitive in which Alice begins with a \emph{choice bit} $a\in\zo$ and Bob begins with two \emph{data bits} $x_0,x_1\in\zo$.
The security goals are:
\begin{enumerate}
  \item \emph{Completeness:} If both parties are honest then Alice learns the value of $x_a$.
  \item \emph{Soundness against cheating-Bob:} Cheating-Bob obtains no information about honest-Alice's choice bit $a$.
  \item \emph{Soundness against cheating-Alice:} \label{it:security:cheating-Alice} Cheating-Alice obtains no information about at least one of honest-Bob's two data bits $x_0,x_1$.
\end{enumerate}
The name ``oblivious transfer'' is derived from these requirements:
Bob \emph{transfers} one of two bits to Alice, and is \emph{oblivious} as to which bit he transferred.
It is typical to give priority to the completeness goal and study the extent to which the soundness goals must be compromised in order to achieve it.

Whereas the security goals for both coin flipping and bit commitment are clear and unambiguous, goal \ref{it:security:cheating-Alice} for oblivious transfer admits no simple metric by which to judge the success of cheating-Alice.
It is often the case that cheating-Alice can sacrifice complete information about one of Bob's data bits in exchange for partial information about both, and there is no ``best'' way to allocate that partial information among the bits.
For example, is it better for cheating-Alice to guess the exclusive-OR of Bob's bits with certainty, or to guess \emph{both} his bits with some chance of error?

In this paper we study quantum protocols for oblivious transfer {with perfect completeness} under a natural and {arguably} demanding definition of what it means for Alice to cheat.
For each such protocol we define the symbols
\begin{center}
\begin{tabularx}{\textwidth}{rX}
  $\PB$: & The maximum probability with which cheating-Bob can guess honest-Alice's choice bit $a$. \\
  $\PA$: & The maximum over $a\in\zo$ of the probability with which cheating-Alice can guess $x_\ol{a}$ given that she guesses $x_a$ with certainty.
  (Here $\ol{a}$ denotes the bit-compliment of $a$.)
\end{tabularx}
\end{center}
Observe that $\PA,\PB\geq 1/2$ for every protocol, as a cheating party can do no worse than a random guess. As mentioned previously, classical (non-quantum) protocols are completely insecure, meaning that either $\PA = 1$ or $\PB = 1$ for any such protocol.
By contrast, a consequence of our results is that quantum protocols can achieve both $\PA<1$ and $\PB<1$ simultaneously.

In studies of primitives with inputs such as bit commitment and oblivious transfer (as opposed to primitives \emph{without} inputs such as coin flipping) it is standard to assume, in the information theoretic setting, that an honest party's input bits are uniformly random, so that the probabilities $\PA,\PB$ are taken over uniformly random data bits for Bob and choice bit for Alice, respectively.
In fact, there is an equivalence between oblivious transfer with and without random inputs that maintains cheating probabilities---see, for example, Refs.\ \cite{Schaffner10,ChaillouxKS13-QIC} and the references therein.
See Section \ref{sec:intro:security} for further comments on the security definition.

We henceforth adopt the name \emph{semi-honest oblivious transfer} to refer to the fact that we require security only against cheating-Alices who guess one of honest-Bob's bits with certainty, and that such a cheating-Alice must learn anything that honest-Alice could learn.

{In this paper, we show that a ${2}/{3}$ cheating probability for both Alice and Bob can be achieved and is optimal. Of course, such a cheating probability is usually not suitable for cryptography and therefore, this result should not be seen as an optimal construction for real life cryptography. The interest of this paper is rather to study the limits of oblivious transfer in the information theoretic setting, and to determine precisely what is possible and what is not in  our quantum mechanical world. Notice also that we provide both a lower bound and an upper bound for quantum oblivious transfer. Therefore, if one considers for example that the security definition is too weak, this will only strengthen our lower bound, while weakening the upper bound.}
 
\subsection{Results} \label{sec:intro:results}

We present an optimal security bound for quantum protocols for semi-honest oblivious transfer.

{\sloppy
\begin{theorem}[Lower bound curve] \label{thm:LowerBound}
  In any quantum protocol for semi-honest oblivious transfer it holds that ${2 \PB + \PA \geq 2}$.
\end{theorem}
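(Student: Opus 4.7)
The plan is to construct an explicit cheating-Alice strategy whose success is controlled by the fidelity between Bob's reduced states in the two honest executions. First, I would purify the honest protocol: for each $a \in \zo$, let $\ket{\Phi_a}_{\sA\sB}$ denote the final joint state when honest Alice holds input $a$ and Bob's uniformly random data bits $x_0,x_1$ are stored in purification registers $X_0,X_1$ on his side. Completeness provides, for each $a$, a two-outcome projective measurement $\{\Pi_a^0, \Pi_a^1\}$ on $\sA$ extracting $x_a$ from $\ket{\Phi_a}$ with certainty; equivalently, the ``match'' projector
\[
  M_a \defeq \sum_{b \in \zo} \Pi_a^b \ot \ketbra{b}{b}_{X_a} \ot I
\]
fixes $\ket{\Phi_a}$.

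Next I would relate $\PB$ to a fidelity bound. An optimal cheating Bob is simply distinguishing the reduced states $\rho_a^\sB = \trace_\sA \kb{\Phi_a}$, so $\PB = \tfrac12\pa{1 + D(\rho_0^\sB,\rho_1^\sB)}$; Fuchs--van de Graaf then yields $F \defeq F(\rho_0^\sB,\rho_1^\sB) \geq 1 - D = 2 - 2\PB$, and Uhlmann's theorem supplies a unitary $V$ on $\sA$ with $\abs{\bra{\Phi_1}(V \ot I)\ket{\Phi_0}} = F$.

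The candidate cheating-Alice attack (with $a=0$) is then: execute the honest $a=0$ protocol to produce $\ket{\Phi_0}$, extract $x_0$ with certainty via $\{\Pi_0^b\}$, apply $V$ to the Alice registers, and apply $\{\Pi_1^b\}$ to obtain a guess $y_1$ for $x_1$. The success probability equals $\bra{\Phi_0}V^\dagger M_1 V\ket{\Phi_0}$, which is at least $\abs{\bra{\Phi_1}V\ket{\Phi_0}}^2 = F^2$ since $M_1 \succeq \kb{\Phi_1}$.

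The principal obstacle is the gap between this quadratic lower bound $\PA \geq F^2$ and the linear bound $\PA \geq F$ that the theorem requires; closing this gap is the crux of the argument. The approach I would pursue is to formulate $\PA$ and $\PB$ as semidefinite programs in a Kitaev-style bilinear strategy framework and exploit SDP duality, using an optimal dual witness for $\PB$ to construct a primal feasible Alice strategy attaining $2 - 2\PB$. A complementary route is to refine the explicit attack by jointly optimising $V$ and the final measurement, or by combining attacks for $a=0$ and $a=1$, leveraging the fact that $M_1$ is a large projector whose range extends well beyond the single pure state $\ket{\Phi_1}$.
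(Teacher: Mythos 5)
You have correctly identified the fatal weakness in your own argument: the Uhlmann-based attack for Alice only yields $\PA \geq F^2$, and since $F \geq 2-2\PB$ the resulting bound $\PA \geq (2-2\PB)^2$ is quadratically too weak. Indeed it is weaker than the trivial bound $\PA\geq 1/2$ whenever $\PB \geq 1-\frac{1}{2\sqrt{2}}\approx 0.646$, so it says nothing at the critical point $\PB=\PA=2/3$. Neither of your proposed repairs is what the paper does, and no SDP duality is needed. The missing idea is that the roles of trace distance and fidelity must be assigned to the two players in the opposite way from your proposal. The paper's cheating-Alice does not use an Uhlmann unitary at all: she runs the honest protocol for $a$, extracts $x_a$ nondestructively, and then applies the \emph{Helstrom measurement} to her own reduced state to distinguish $\rho_{a,x_0,0}$ from $\rho_{a,x_0,1}$ (the two states consistent with the value of $x_{\ol{a}}$, for fixed $a$ and $x_a$). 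Helstrom succeeds with probability $\frac12+\frac14\tnorm{\rho_{a,\cdot,0}-\rho_{a,\cdot,1}}$, which is \emph{linear} in the trace distance---no squaring. The Uhlmann unitary belongs instead to \emph{Bob's} attack: he purifies his random inputs into registers $\cX_0,\cX_1$, applies a controlled unitary $\cont{U_s}$ on his side that (via Uhlmann) rotates the $x_s=1$ branch to overlap maximally with the $x_s=0$ branch under the hypothesis $a=\ol{s}$, and then measures $\cX_s$ in the $\set{\ket{\pm}}$ basis. Because this is an interference measurement on a control register rather than a projection onto a target state, its success probability is $\frac12\pm\frac14\sum\Re\pa{\text{overlap}}$---again \emph{linear} in the fidelity $\F(\rho_{1,0,x_1},\rho_{1,1,x_1})$, not quadratic. (Completeness guarantees the overlap vanishes when $a=s$, since Alice can then distinguish the branches perfectly, so the measurement is uninformative exactly when it should be.)

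With both cheating probabilities linear in their respective quantities---$\PA\geq\frac12+\frac18\Delta$ and $\PB\geq\frac12+\frac{1}{16}F$, where $F$ and $\Delta$ are sums over the four conditional pairs of Alice's reduced states---the first Fuchs--van de Graaf inequality $\F(\rho,\xi)\geq 1-\frac12\tnorm{\rho-\xi}$ gives $F+\Delta\geq 4$ and hence $2\PB+\PA\geq 2$ directly. Note also that the states being compared are Alice's reduced states indexed by $x_{\ol{a}}$ with $(a,x_a)$ fixed, not the two global states indexed by Alice's choice bit; your global fidelity $\F(\rho_0^{\sB},\rho_1^{\sB})$ is not the quantity that enters the tight tradeoff. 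A secondary flaw: you compute the success probability $\bra{\Phi_0}V^\dagger M_1 V\ket{\Phi_0}$ as if the prior $x_0$-extraction had not occurred, but that measurement dephases the $X_0$ register, so even the $F^2$ bound as stated requires an additional argument.
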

}

We show that the lower bound curve of Theorem \ref{thm:LowerBound} is optimal by exhibiting a family of quantum protocols for semi-honest oblivious transfer whose cheating probabilities can be made arbitrarily close to any point on that curve.

\begin{theorem}[Upper bound curve] \label{thm:UpperBound}
  Let $m, n \in [1/2,1]$ be any numbers on the line $2n+m=2$.
  For any $\epsilon > 0$ there exists a quantum protocol for semi-honest oblivious transfer with
  \[ \PA \leq m + \epsilon/2 \quad \text{ and } \quad \PB \leq n + \epsilon/4 \]
  so that $2 \PB + \PA \leq 2 + \epsilon$.
\end{theorem}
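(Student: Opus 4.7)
The plan is to interpolate between two sub-protocols at the endpoints of the line $2b+a=2$ using biased weak coin flipping. At the endpoint $(1,1/2)$, a trivial sub-protocol $T_1$ in which Bob sends both data bits to Alice in the clear achieves $(\PA^{(1)},\PB^{(1)})=(1,1/2)$ exactly: honest-Alice simply reads off $x_a$, cheating-Alice learns both bits, and Bob receives no communication from Alice. At the endpoint $(1/2,3/4)$, one needs a genuinely quantum sub-protocol $T_2$: Alice encodes her choice bit $a$ in one of two qubit states $|\phi_0\rangle,|\phi_1\rangle$ with $\langle \phi_0|\phi_1\rangle = \sqrt{3}/2$, Bob applies an $(x_0,x_1)$-dependent unitary and returns the qubit, and Alice measures to recover $x_a$. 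By the Helstrom bound, cheating-Bob's optimal distinguishing probability is exactly $3/4$; a careful choice of encoding ensures that cheating-Alice gains no information about $x_{\overline a}$ when she demands $x_a$ with certainty, so $(\PA^{(2)},\PB^{(2)}) = (1/2,3/4)$.

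Given these sub-protocols, fix a target $(a,b)$ on the curve, set $p := 2-2a \in [0,1]$ (so $b = 1/2 + p/4$), and invoke the Aharonov--Chailloux--Ganz--Kerenidis--Magnin biased weak coin flipping construction to obtain a coin-flipping protocol with honest distribution $(1-p,p)$ over outcomes $\{T_1,T_2\}$ and cheating bias at most $\eps' := \eps$. The full protocol runs the weak coin flip and then executes the sub-protocol selected by the outcome. Cheating-Alice prefers outcome $T_1$ and can force it with probability at most $1-p+\eps'$, giving
\[
\PA \leq (1-p+\eps')\cdot 1 + (p-\eps')\cdot \tfrac{1}{2} = 1 - \tfrac{p}{2} + \tfrac{\eps'}{2} = a + \tfrac{\eps}{2};
\]
cheating-Bob prefers outcome $T_2$, so
\[
\PB \leq (p+\eps')\cdot \tfrac{3}{4} + (1-p-\eps')\cdot \tfrac{1}{2} = \tfrac{1}{2} + \tfrac{p}{4} + \tfrac{\eps'}{4} = b + \tfrac{\eps}{4}.
\]
Adding these inequalities yields $2\PB+\PA \leq 2+\eps$, as required.

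The chief obstacle is constructing $T_2$: one must design the qubit encoding $|\phi_a\rangle$ and Bob's unitary so that (i) honest-Alice can recover $x_a$ by measuring the returned qubit, (ii) cheating-Alice, constrained to know $x_a$ with certainty, has no advantage over random guessing for $x_{\overline a}$, and (iii) the Helstrom $3/4$ bound on distinguishing $|\phi_0\rangle$ from $|\phi_1\rangle$ is the only avenue of information leakage about Alice's choice bit. A secondary technical point is verifying that sequential composition of the weak coin flip and the chosen sub-protocol does not allow a cheater to couple the two stages; since the outcome of the weak coin flip is publicly announced before the sub-protocol begins, the cheating probabilities factorize cleanly as a convex combination, so the naive bound used above is valid.
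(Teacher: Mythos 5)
Your overall architecture is exactly the paper's: take the trivial classical protocol at the endpoint $(\PA,\PB)=(1,1/2)$, take a quantum protocol at the endpoint $(1/2,3/4)$, and glue them with a $\lambda$-unbalanced weak coin flip whose outcome is announced before the selected sub-protocol runs. Your convex-combination arithmetic matches the paper's (with $\lambda = 1-p$), and your observation that the two stages decouple because the coin-flip outcome is public is the same justification the paper uses implicitly. The restriction to dyadic $\lambda$ and the density argument needed to hit arbitrary points on the line is glossed over in your write-up, but that is minor.

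The genuine gap is the endpoint protocol $T_2$, which you correctly identify as ``the chief obstacle'' but then do not construct. Two problems with your sketch. First, the single-qubit encoding with $\langle\phi_0|\phi_1\rangle=\sqrt{3}/2$ is dimensionally suspect: honest-Alice must recover $x_a$ \emph{with certainty} from the returned message, which requires that Bob's four unitaries map $\ket{\phi_a}$ to states that are pairwise orthogonal in the $x_a$ coordinate, and a two-dimensional message space leaves no room for the reference component against which Bob's $(x_0,x_1)$-dependent phase can be detected. The paper's Protocol~\ref{BobProtocol} (due to Chailloux, Kerenidis, and Sikora) uses the entangled two-\emph{qutrit} state $\frac{1}{\sqrt2}\ket{aa}+\frac{1}{\sqrt2}\ket{22}$ precisely so that the $\ket{22}$ branch serves as this phase reference while the reduced state of the transmitted qutrit, $\frac12(\kb{a}+\kb{2})$, still limits cheating-Bob to $3/4$. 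Second, and more importantly, your condition (ii) --- that cheating-Alice constrained to learn $x_a$ with certainty gains nothing about $x_{\overline a}$ --- is the entire nontrivial content of this endpoint, and you assert it rather than prove it. The paper's Proposition~\ref{lemma:BobProtocol} establishes it by parametrizing an arbitrary cheating state $\alpha\ket{e_0}\ket{0}+\beta\ket{e_1}\ket{1}+\gamma\ket{e_2}\ket{2}$, computing the two guessing probabilities as $\frac12+\abs{\alpha\gamma}$ and $\frac12+\abs{\beta\gamma}$, and showing that certainty for $x_0$ forces $\abs{\alpha\gamma}=\frac12$, hence $\beta=0$. Without this argument (or an equivalent one for whatever encoding you choose), the claimed point $(1/2,3/4)$ is unsupported and the whole interpolation collapses. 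Note also that the \emph{same} CKS protocol admits a cheating-Alice who learns both bits with probability $3/4$ when the certainty constraint is dropped, so the claim is genuinely sensitive to the security definition and cannot be waved through.
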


Taken together, Theorems \ref{thm:LowerBound} and \ref{thm:UpperBound} completely characterize the pairs $(\PB,\PA)$ that can be obtained by perfect quantum protocols for semi-honest oblivious transfer---see Figure \ref{fig:example} below. 

\begin{figure}[htbp] 
   \centering
   \includegraphics[width=4in]{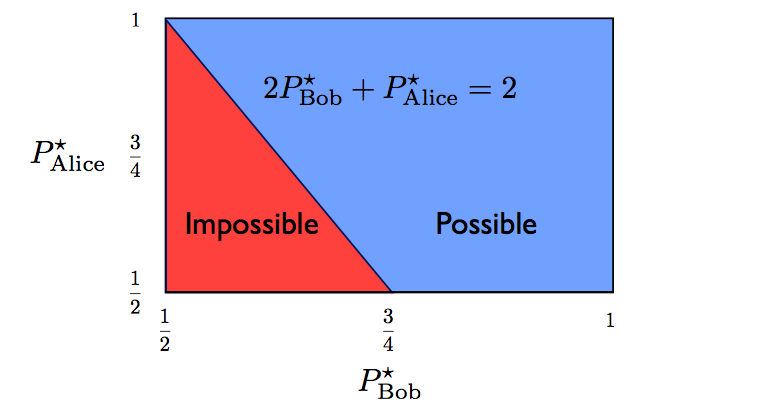}
   \caption{The possible values for $(\PB, \PA)$ in a quantum semi-honest oblivious transfer protocol.}
   \label{fig:example}
\end{figure}

As a corollary, we obtain an optimal bound on the maximum probability with which one party can cheat in any quantum protocol for semi-honest oblivious transfer.

\begin{corollary}[Optimal maximum cheating probability] \label{cor:bound}
  In any quantum protocol for semi-honest oblivious transfer it holds that $\max \set{\PB, \PA} \geq 2/3$.
  Moreover, for any $\epsilon > 0$, there exists a protocol satisfying $\max \set{\PB, \PA} \leq 2/3 + \epsilon$.
\end{corollary}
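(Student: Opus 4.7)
The plan is to derive both halves of the corollary as immediate consequences of Theorems~\ref{thm:LowerBound} and~\ref{thm:UpperBound}, specialized at the point where the tradeoff line $2b+a=2$ crosses the diagonal $a=b$, which occurs at $a=b=2/3$.

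First I would prove the lower bound by a one-line contradiction argument. Suppose for contradiction that a protocol exists with $\max\{\PB,\PA\} < 2/3$. Then both $\PB < 2/3$ and $\PA < 2/3$, so
\[
2\PB + \PA < 2 \cdot \tfrac{2}{3} + \tfrac{2}{3} = 2,
\]
which contradicts Theorem~\ref{thm:LowerBound}. Hence $\max\{\PB,\PA\} \geq 2/3$.

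Next I would derive the existence half by plugging the point $(a,b) = (2/3,2/3)$ into Theorem~\ref{thm:UpperBound}. This point lies on the line $2b+a=2$ and has $a,b\in[1/2,1]$, so the theorem applies and yields, for any $\epsilon>0$, a quantum weak-oblivious-transfer protocol with
\[
\PA \leq \tfrac{2}{3} + \epsilon/2 \qquad\text{and}\qquad \PB \leq \tfrac{2}{3} + \epsilon/4.
\]
Therefore $\max\{\PB,\PA\} \leq 2/3 + \epsilon/2 \leq 2/3 + \epsilon$, completing the proof.

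There is no real obstacle here: the whole statement is a packaged consequence of the two main theorems, with the only work being to observe that the minimax point on the tradeoff line $2b+a=2$ subject to $a,b\geq 1/2$ is attained at $a=b=2/3$. The lower bound follows from the linear constraint, and the matching protocol exists by choosing the specific point on the upper-bound curve.
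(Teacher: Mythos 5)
Your proof is correct and matches the paper's (implicit) derivation: the corollary is stated as an immediate consequence of Theorems~\ref{thm:LowerBound} and~\ref{thm:UpperBound}, obtained exactly as you do by evaluating the tradeoff line $2b+a=2$ at its intersection with $a=b$. The only cosmetic caveat is that the paper's construction behind Theorem~\ref{thm:UpperBound} uses dyadic values of $\lambda$ and $2/3$ corresponds to $\lambda=1/3$, but since the theorem is stated for all points on the line (with dyadic points dense, any discrepancy is absorbed into $\epsilon$), your instantiation at $(a,b)=(2/3,2/3)$ is legitimate.
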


Thus, there is no hope of ``amplifying'' quantum protocols for semi-honest oblivious transfer in order to get both cheating probabilities close to $1/2$.
 
The security requirements of semi-honest oblivious transfer demand more of cheating-Alice than any previous study of oblivious transfer of which we are aware. 
As such, the lower bound of Theorem \ref{thm:LowerBound} is also an improvement on all known variants of oblivious transfer.
For example:
\begin{itemize}
  \item In Ref.\ \cite{ChaillouxKS13-QIC} it was proven that one party can always cheat with probability $0.585$ under a security requirement similar to ours except that cheating-Alice need not guess one of Bob's bits with certainty.
  \item In Ref.\ \cite{ChaillouxKS14} a cheating probability of $0.599$ was proven under the requirement that cheating-Alice need only guess the exclusive-OR of Bob's bits. 
\end{itemize}
{Cheating with respect to our definition in Section \ref{sec:intro:weak-OT} implies cheating with respect to each of the above definitions.} Therefore, for both of these security definitions, Theorem \ref{thm:LowerBound} improves the cheating probability to $2/3$. 
Further comments on security definitions for oblivious transfer can be found in Sections \ref{sec:intro:security} and \ref{sec:crazy-sec}.

\subsection{Notes on the security definition} \label{sec:intro:security}
 
\begin{enumerate}

\item
  Our definition of semi-honest oblivious transfer has a desirable property: it enforces an intuitively satisfying notion of what it means to ``cheat'' by requiring that cheating parties always learn at least as much as their honest counterparts.
  As mentioned in Section \ref{sec:intro:weak-OT}, cheating-Alice can often sacrifice complete information about one of Bob's bits in exchange for partial information about both.
  Such a strategy has the peculiar property that cheating-Alice knows \emph{less} about one of Bob's bits than honest-Alice!

  Indeed, many attacks on quantum protocols for oblivious transfer exploit this possibility.
  For example, Chailloux, Kerenidis, and Sikora exhibit a protocol in which cheating-Alice can guess both of Bob's bits with probability $3/4$ \cite{ChaillouxKS13-QIC}.
  By contrast, we show in Section \ref{ssect:BobProtocol} that Alice's cheating probability for this protocol drops to $1/2$---indicating that she cannot cheat at all---when we add the requirement that she guess one of Bob's bits with certainty.

\item
  The assumption that an honest party's inputs are uniformly random does not lend itself to proofs of composable security.
  Consequently, we are concerned only with so-called ``stand-alone'' security of semi-honest oblivious transfer in this paper; our results are not known to hold under sequential or parallel composition of multiple protocols.

\item
  It is natural to wonder how the definition of semi-honest oblivious transfer is affected by relaxing it so that cheating-Alice need only guess one of Bob's bits with probability $1-\delta$ for some small $\delta$.
  We comment on this relaxation in Section \ref{sec:crazy-sec}.

\item
Our results are not \emph{cheat-sensitive}, meaning that security is guaranteed even if the honest party detects cheating and chooses to abort the protocol. Each of the cheating strategies we consider in this paper has the property that the honest party is not aware that the dishonest party is cheating.
 
\end{enumerate}

\subsection{Prior work}  \label{sec:intro:prior-work}
  
\emph{Coin flipping} is a two-party primitive in which Alice and Bob wish to agree on a uniformly random bit in such a way that a cheating party cannot bias the sampling distribution of that bit.
\emph{Strong} coin flipping refers to the requirement that a cheater cannot bias the result in either direction, whereas \emph{weak} coin flipping assumes that Alice and Bob only cheat towards opposing outcomes.
Remarkably, there exist quantum protocols for weak coin flipping with cheating probabilities arbitrarily close to $1/2$ \cite{Mochon07}, achieving near-perfect unconditional security.
By contrast, in any quantum protocol for strong coin flipping at least one party can cheat with probability $1/\sqrt{2}\approx 0.707$~\cite{Kitaev02} and this bound is optimal~\cite{ChaillouxK09}.

\emph{Bit commitment} is another two-party primitive in which Alice wishes to commit to a specific bit value to Bob in such a way that Bob learns nothing about the committed value until Alice chooses to reveal it, yet Alice cannot reveal a value different from her commitment.
Any quantum protocol for bit commitment allows one party to cheat with probability at least $0.739$, and this bound is optimal \cite{ChaillouxK11}.
Our optimal bound of $2/3$ for semi-honest oblivious transfer adds yet another universal constant to the above list of cheating probabilities.
 
The first quantum protocol for oblivious transfer, called ``multiplexing'' at the time, was presented in a paper by Wiesner in the 1970's, which took until 1983 to get published  \cite{Wiesner83}. Wiesner observed that the security of his protocol rested upon technological limitations, and that his protocol is broken in an information theoretic sense.

Shortly thereafter, Bennett, Brassard, Breidbard, and Wiesner~\cite{BennettBBW83} presented another protocol wherein Alice can learn either bit with probability $\cos^2(\pi/8)\approx 0.854$ and the other bit is hidden afterwards. This protocol differs from our definition by sacrificing completeness in exchange for soundness, but it has other desirable properties such as its use for succinct random access codes \cite{Nayak99,AmbainisN+02}.

In 1997, it was shown by Lo~\cite{Lo97} that if Bob has no information about Alice's choice bit then Alice can learn both of Bob's data bits with certainty, rendering ideal quantum oblivious transfer impossible.
Since then interesting protocols that assume bounds on Alice's ability to cheat have been proposed, such as bounded quantum storage~\cite{DamgaardFSS08} or noisy quantum storage \cite{WehnerST08,Schaffner10}.
Recently, oblivious transfer in the noisy-storage model has been implemented in the laboratory \cite{ErvenN+13}.

There have been other analyses of the security of quantum oblivious transfer protocols. For example, Salvail, Schaffner and Sotakova \cite{SalvailSS09} give lower bounds on the amount of information that is leaked to a dishonest party in any oblivious transfer protocol.
In another work, Jain, Radhakrishnan and Sen \cite{JainRS09} showed a tradeoff between how many bits of information each player gets about the other party's input for $1$-out-of-$n$ oblivious transfer.
The security analyses of these works involves information notions and entropy and does not directly translate to cheating probabilities, which is the measure of security used in this paper.

\subsection{Mathematical preliminaries and notation}
 
We assume familiarity with quantum information; the purpose of this section is to clarify notation.
The \emph{trace norm} $\tnorm{X}$ of an operator $X$ is equal to the sum of the singular values of $X$.
The quantity ${\tnorm{\rho-\xi}}$ quantifies the observable difference between two quantum states $\rho,\xi$ in the sense that the maximum probability with which one could correctly identify one of $\set{\rho,\xi}$ chosen uniformly at random is
\[ \frac{1}{2} + \frac{1}{4}\tnorm{\rho-\xi}. \]
The \emph{fidelity} between two states $\rho,\xi$ is defined as
\[ \F(\rho,\xi) = \Tnorm{\sqrt{\rho}\sqrt{\xi}}. \]
\emph{Uhlmann's Theorem} asserts that for any states $\rho,\xi$ of system $\cX$ and any purifications $\ket{\phi},\ket{\psi}\in\cX\ot\cY$ of $\rho,\xi$ we have
\[ \F(\rho,\xi) = \max_U \abs{\bra{\phi} (I_\cX\ot U)\ket{\psi}} \]
where the maximum is over all unitaries $U$ acting on $\cY$.
The fidelity and trace norm are related by the Fuchs-van de Graaf inequalities \cite{FuchsvdG99}, which assert the following for any quantum states $\rho, \xi$:
\begin{equation}
  \label{eq:FvD-states}
  1 - \frac{1}{2}\tnorm{\rho - \xi} \leq \F(\rho,\xi) \leq \sqrt{1 - \frac{1}{4}\tnorm{\rho-\xi}^2}.
\end{equation}
We require only the first of these two inequalities.

\section{Lower bound on the security tradeoff} \label{sect:LowerBound}
 
In this section we prove Theorem \ref{thm:LowerBound} (Lower bound curve) by constructing cheating strategies for both Alice and Bob for any oblivious transfer protocol.
Our cheating strategies both mimic honest strategies until the end of the protocol.

Fix an arbitrary quantum protocol for oblivious transfer.
For each choice of $a\in\zo$ for Alice and $x_0,x_1\in\zo$ for Bob let
\[ \ket{\psi_{a, x_0, x_1}} \in \cA \ot \cB \]
denote the pure state of the entire system at the end of the protocol, assuming all measurements have been deferred and when both parties have been honest.
Here $\cA,\cB$ denote the spaces associated with Alice's and Bob's portions of that system, respectively.
Let
\[ \rho_{a,x_0,x_1} \defeq \Ptr{\cB}{\kb{\psi_{a,x_0,x_1}}} \]
denote the reduced state of Alice's portion of the system.
 
\subsection{Cheating Alice}
  
We now describe for each choice of $a\in\set{0,1}$ a strategy for cheating-Alice that begins by employing the strategy for honest-Alice in order to learn $x_a$ with certainty and then attempts to learn something about $x_\ol{a}$.

First, observe that since honest-Alice can learn $x_a$ with certainty, there must be a non-destructive measurement that allows her to do so without disturbing the state of the system.
We may assume without loss of generality that honest-Alice performs such a measurement, so that the reduced state of Alice's portion of the system after she has learned $x_a$ is still $\rho_{a,x_0,x_1}$.

Cheating-Alice can now learn something about the other bit $x_\ol{a}$ by performing the Helstrom measurement in order to optimally distinguish which of the two possible states she holds.
Since honest-Bob's data bits $x_0,x_1$ are uniformly random, in the case $a=0$ this strategy allows cheating-Alice to guess $x_1$ with probability
\[ \dfrac{1}{2} + \dfrac{1}{4} \Tnorm{\rho_{0,x_0,0} - \rho_{0,x_0,1}} \]
for each fixed choice of $x_0$.
Similarly, in the case $a=1$ this strategy allows her to guess $x_0$ with probability
\[ \dfrac{1}{2} + \dfrac{1}{4} \Tnorm{\rho_{1,0,x_1} - \rho_{1,1,x_1}} \]
for each fixed choice of $x_1$.
Our strategy for cheating-Alice calls for her to implement one of these two strategies at random, in which case she successfully cheats with probability
\[ \dfrac{1}{2} + \dfrac{1}{8} \Pa{\Tnorm{\rho_{0,x_0,0} - \rho_{0,x_0,1}} + \Tnorm{\rho_{1,0,x_1} - \rho_{1,1,x_1}}} \]
for each fixed choice of $x_0,x_1$, from which we conclude the following.

\begin{proposition}[Cheating probability for Alice]
\label{lemma:AliceCheating}

  In any {semi-honest quantum oblivious transfer protocol with perfect completeness,}  Alice can cheat with probability at least
  \[ 
    \half + \frac{1}{16} \Pa{
      \sum_{x_0\in\set{0,1}} \Tnorm{\rho_{0,x_0,0} - \rho_{0,x_0,1}} +
      \sum_{x_1\in\set{0,1}} \Tnorm{\rho_{1,0,x_1} - \rho_{1,1,x_1}}
    }
  \]
  where $\rho_{a,x_0,x_1}$ denotes the reduced state of Alice's portion of the system when both parties are honest and Bob has data bits $x_0,x_1$ and Alice has choice bit $a$.

\end{proposition}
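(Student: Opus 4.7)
The plan is to convert the informal derivation preceding the proposition into a clean averaging argument. The proof has three ingredients: a no-disturbance reduction showing that cheating-Alice can extract $x_a$ while retaining the reduced state $\rho_{a,x_0,x_1}$, the Helstrom bound for optimally distinguishing two equally-likely states, and an averaging over Bob's uniformly random data bits together with Alice's uniformly random choice of which strategy to run. The only step requiring real care is the first; the remaining ingredients are bookkeeping.

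First I would justify the no-disturbance reduction. Completeness forces honest-Alice's decoder for $x_a$ to succeed with probability $1$, so in the model where all measurements are deferred the decoder amounts to a projective measurement on $\cA$ whose classical outcome is deterministically $x_a$. Any such measurement can be implemented coherently by copying the (already classical) outcome onto a fresh register and then measuring that register, which leaves the marginal on the rest of $\cA$ equal to $\rho_{a,x_0,x_1}$. Cheating-Alice adopts this coherent implementation, so by construction she matches the honest distribution on $x_a$ and on Bob's view, and is free to apply any further local measurement on what remains.

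Second I would apply the Helstrom bound. Fix $a=0$ and a value of $x_0$; because honest-Bob's other bit $x_1 \in \{0,1\}$ is uniformly random and independent of $x_0$, cheating-Alice faces the two equally-likely states $\rho_{0,x_0,0}, \rho_{0,x_0,1}$, and her optimal probability of guessing $x_1$ is exactly $\half + \tfrac{1}{4}\Tnorm{\rho_{0,x_0,0} - \rho_{0,x_0,1}}$. The symmetric argument for $a=1$ gives, for each fixed $x_1$, a success probability of $\half + \tfrac{1}{4}\Tnorm{\rho_{1,0,x_1} - \rho_{1,1,x_1}}$ against the uniform prior on $x_0$.

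For the final step, cheating-Alice decides by a fair coin toss whether to set $a=0$ or $a=1$. Averaging the two Helstrom success probabilities above over $(x_0,x_1)$ uniform on $\{0,1\}^2$ and over this coin toss yields the claimed $1/16$ coefficient: the Helstrom prefactor $1/4$ combines with the factor $1/2$ from averaging over $x_0$ (respectively $x_1$) inside each sum and the further factor $1/2$ from the coin toss. Since Alice's additional steps are purely local measurements on her own register, honest-Bob's view of the interaction is identical to the honest execution and he does not abort, so the average success probability just computed is indeed a valid lower bound on $\PA$.
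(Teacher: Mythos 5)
Your proposal is correct and follows essentially the same route as the paper: a non-disturbing measurement to learn $x_a$ with certainty (justified by completeness), the Helstrom bound to guess $x_{\ol{a}}$ from the residual state $\rho_{a,x_0,x_1}$, and an averaging over the random choice of $a$ and over Bob's uniform data bits to produce the $1/16$ coefficient. Your slightly more explicit justification of the no-disturbance step (a deterministic-outcome measurement can be copied out coherently without altering the marginal) is a welcome elaboration of what the paper simply asserts, but the argument is the same.
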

 
\subsection{Cheating Bob}
  
Our strategy for cheating-Bob calls for him to implement a ``purification'' of a strategy for honest-Bob with uniformly random data bits.
In other words, he implements a uniform superposition over $x_0,x_1$ of honest strategies.
In order to do so, he requires two additional private qubits, which we associate with spaces $\cX_0,\cX_1$.
Conditioned on honest-Alice's choice $a$, the pure state $\ket{\xi_a}$ of the entire system after an interaction with honest-Alice is thus
\[ \ket{\xi_a} = \dfrac{1}{2} \sum_{x_0,x_1\in\set{0,1}} \ket{\psi_{a,x_0,x_1}} \ket{x_0} \ket{x_1} \in \cA\ot\cB\ot\cX_0\ot\cX_1. \]

We now describe two cheating strategies for Bob to attempt to learn $a$, one for each fixed choice of $s\in\zo$.
The strategy is to apply the unitary $\cont{U_s}$ acting on $\cB \otimes \cX_0 \otimes \cX_1$ specified below and then measure the $\cX_s$ register in the $\set{\ket{\pm}}$ basis and guess $a=s$ on outcome `$-$' and $a=\ol{s}$ on outcome `$+$'.

Intuitively, the unitary $\cont{U_s}$ is a controlled-unitary that tries to make the state of the system look as though the bit $x_s$ were equal to zero under the assumption that Alice chose $a=\ol{s}$.
Formally, the nontrivial actions of the unitaries $\cont{U_s}$ are specified by
\begin{alignat*}{2}
  \cont{U_0} &: \ket{\psi_{1,1,x_1}} \ket{1} \ket{x_1} \mapsto (I_{\cA} \otimes U_{0,x_1}) \ket{\psi_{1,1,x_1}} \ket{1} \ket{x_1} & \quad \textrm{for $x_1\in\zo$} \\
  \cont{U_1} &: \ket{\psi_{0,x_0,1}} \ket{x_0} \ket{1} \mapsto (I_{\cA} \otimes U_{1,x_0}) \ket{\psi_{0,x_0,1}} \ket{x_0} \ket{1} & \quad \textrm{for $x_0\in\zo$}
\end{alignat*}
where $U_{0,x_1},U_{1,x_0}$ are unitaries acting on $\cB$ satisfying
\begin{alignat*}{2}
  \F(\rho_{1,0,x_1},\rho_{1,1,x_1}) & = \bra{\psi_{1,0,x_1}}(I_\cA\ot U_{0,x_1})\ket{\psi_{1,1,x_1}} & \quad \textrm{for $x_1\in\zo$} \\
  \F(\rho_{0,x_0,0},\rho_{0,x_0,1}) & = \bra{\psi_{0,x_0,0}}(I_\cA\ot U_{1,x_0})\ket{\psi_{0,x_0,1}} & \quad \textrm{for $x_0\in\zo$}
\end{alignat*}
respectively, as per Uhlmann's Theorem.

Let us analyze cheating-Bob's probability of success in the case $s=0$; a similar analysis applies to the case $s=1$.
To this end we compute the squared norm of the projections of
\[ \cont{U_0}\ket{\xi_a} = \frac{1}{2} \Pa{ \sum_{x_1\in\zo} \ket{\psi_{a,0,x_1}}\ket{0}\ket{x_1} + \sum_{x_1\in\zo} U_{0,x_1}\ket{\psi_{a,1,x_1}}\ket{1}\ket{x_1} } \]
onto the states $\ket{\pm}\in\cX_0$ for each fixed choice of $a$.
After some calculations we find that this quantity is equal to
\[ \frac{1}{2} \pm \frac{1}{4}\sum_{x_1\in\zo} \Re\Pa{ \bra{\psi_{a,0,x_1}}(I_\cA\ot U_{0,x_1})\ket{\psi_{a,1,x_1}} }, \]
where $\Re(\cdot)$ denotes the real part of a complex number.
  
If $a=0$ then Alice must be able to learn $x_0$ with certainty, so that $\ket{\psi_{0,0,x_1}}$ and $\ket{\psi_{0,1,x_1}}$ are locally distinguishable by Alice and hence their inner product must equal zero regardless of the unitary $U_{0,x_0}$ applied to Bob's portion $\cB$.
In this case, the above quantity equals $1/2$ and so cheating-Bob correctly guesses $a=0$ with probability $1/2$.

On the other hand, if $a=1$ then
\[ \Re(\bra{\psi_{1,0,x_1}}(I_\cA\ot U_{0,x_1})\ket{\psi_{1,1,x_1}}) = \F(\rho_{1,0,x_1},\rho_{1,1,x_1}) \]
by our choice of $U_{0,x_1}$.
The probability of obtaining outcome `$\pm$' is therefore
\[ \frac{1}{2} \pm \frac{1}{4}\sum_{x_1\in\zo} \F(\rho_{1,0,x_1},\rho_{1,1,x_1}) \]
and so cheating-Bob correctly guesses $a=1$ with the larger of these two probabilities.

Thus, the probability over a uniformly random choice of $a$ that cheating-Bob correctly guesses Alice's bit $a$ conditioned on his choice $s=0$ is given by
\[ \frac{1}{2} + \frac{1}{8} \sum_{x_1\in\zo} \F(\rho_{1,0,x_1},\rho_{1,1,x_1}). \]
A similar argument establishes that the probability over a random $a$ that cheating-Bob correctly guesses Alice's bit $a$ conditioned on his choice $s=1$ is given by
\[ \frac{1}{2} + \frac{1}{8}\sum_{x_0\in\zo} \F(\rho_{0,x_0,0},\rho_{0,x_0,1}). \]
Our strategy for cheating-Bob calls for him to choose $s$ at random, from which we conclude the following.

\begin{proposition}[Cheating probability for Bob]
\label{lemma:BobCheating}

  In any {semi-honest quantum oblivious transfer protocol with perfect completeness,} Bob can cheat with probability at least
  \[ 
    \frac{1}{2} + \frac{1}{16} \Pa{ \sum_{x_0\in\zo} \F(\rho_{0,x_0,0},\rho_{0,x_0,1}) + \sum_{x_1\in\zo} \F(\rho_{1,0,x_1},\rho_{1,1,x_1}) }
  \]
  where $\rho_{a,x_0,x_1}$ denotes the reduced state of Alice's portion of the system when both parties are honest and Bob has data bits $x_0,x_1$ and Alice has choice bit $a$.

\end{proposition}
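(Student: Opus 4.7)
The plan is to formalize the cheating strategy already sketched in the preceding discussion and verify that the case analysis gives exactly the stated bound. First I would fix $s\in\zo$ and write out cheating-Bob's strategy explicitly: he prepares the purification $\ket{\xi_a}$, applies the controlled unitary $\cont{U_s}$ whose controlled unitaries $U_{s,x_{\ol{s}}}$ are chosen via Uhlmann's Theorem so as to maximize the relevant inner product, and finally measures $\cX_s$ in the $\set{\ket{+},\ket{-}}$ basis, guessing $a=s$ on outcome `$-$' and $a=\ol{s}$ on outcome `$+$'. To justify that $\cont{U_s}$ is a well-defined unitary on $\cB\ot\cX_0\ot\cX_1$, I would note that its action is prescribed on an orthonormal family and extended arbitrarily on the orthogonal complement.

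Next I would compute the success probability conditioned on each fixed value of $a$ by expanding $\cont{U_s}\ket{\xi_a}$ and taking the squared norms of the projections onto $\ket{\pm}\in\cX_s$. A direct expansion yields the cross terms $\bra{\psi_{a,\cdot,\cdot}}(I_{\cA}\ot U_{s,\cdot})\ket{\psi_{a,\cdot,\cdot}}$ displayed in the discussion, and the analysis then splits into two cases. When $a=s$, honest-Alice must be able to learn $x_a=x_s$ with certainty via a non-destructive measurement on $\cA$, which forces $\ket{\psi_{s,\cdot,\cdot}}$ indexed by distinct values of $x_s$ to be locally orthogonal on $\cA$; their inner products therefore vanish regardless of which unitary $\cont{U_s}$ applies to $\cB$, and the two outcomes are equally likely, giving success probability $1/2$. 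When $a=\ol{s}$, Uhlmann's Theorem together with our choice of $U_{s,x_{\ol{s}}}$ forces the real part of the cross terms to equal the corresponding fidelities $\F(\rho_{\ol{s},0,x_{\ol{s}}},\rho_{\ol{s},1,x_{\ol{s}}})$ (with the appropriate index in the $s=1$ case), so the success probability becomes $\frac{1}{2}+\frac{1}{4}\sum_{x_{\ol{s}}}\F(\cdot,\cdot)$.

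Averaging uniformly over $a\in\zo$ yields conditional success probability $\frac{1}{2}+\frac{1}{8}\sum_{x_{\ol{s}}}\F(\cdot,\cdot)$ for each fixed $s$, and averaging again over a uniformly random choice of $s\in\zo$ gives exactly the claimed lower bound with prefactor $\frac{1}{16}$. The verification is essentially an arithmetic bookkeeping exercise at this point.

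The main obstacle I anticipate is not mathematical depth but notational bookkeeping: keeping the roles of $s$, $\ol{s}$, $a$, $x_s$, and $x_{\ol{s}}$ aligned so that the local-distinguishability argument is applied to the correct pair of states, and making sure the index placement in the fidelities matches the statement for both $s=0$ and $s=1$. A secondary technical point worth care is justifying the assumption that honest-Alice's measurement for $x_a$ can be taken to be non-destructive, which is standard (purify all measurements, defer, and use the fact that a projective measurement whose outcome is determined with certainty does not disturb the state) but should be stated explicitly so that the reduced state $\rho_{a,x_0,x_1}$ on $\cA$ is the one appearing in the fidelities.
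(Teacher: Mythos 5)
Your proposal is correct and follows essentially the same route as the paper: purify honest-Bob's strategy over uniformly random data bits, apply the Uhlmann-derived controlled unitary $\cont{U_s}$, measure $\cX_s$ in the $\ket{\pm}$ basis, split the analysis on $a=s$ (local orthogonality on $\cA$ kills the cross terms, giving $1/2$) versus $a=\ol{s}$ (cross terms equal the fidelities), and average over $a$ and $s$ to land on the $1/16$ prefactor. The technical points you flag (well-definedness of $\cont{U_s}$ as a controlled unitary and the non-destructiveness of honest-Alice's measurement) are exactly the ones the paper handles implicitly.
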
 
 
\subsection{Obtaining the lower bound} \label{ssect:lowerbound}
  
It is straightforward to combine Propositions \ref{lemma:AliceCheating} and \ref{lemma:BobCheating} (Cheating strategies for Alice and Bob) in order to obtain a lower bound on the security of any quantum protocol for oblivious transfer.
To this end fix any such protocol and define the quantities
\begin{align*}
 F &\defeq \sum_{x_0\in\zo} \F(\rho_{0,x_0,0},\rho_{0,x_0,1}) + \sum_{x_1\in\zo} \F(\rho_{1,0,x_1},\rho_{1,1,x_1}) \\
 \Delta &\defeq \half\Pa{ \sum_{x_0\in\zo} \Tnorm{\rho_{0,x_0,0} - \rho_{0,x_0,1}} + \sum_{x_1\in\zo} \Tnorm{\rho_{1,0,x_1} - \rho_{1,1,x_1}} }
\end{align*}
Recall from Section \ref{sec:intro:weak-OT} that $\PA,\PB$ denote the maximum probabilities with which Alice and Bob can cheat, respectively.
By Propositions \ref{lemma:AliceCheating} and \ref{lemma:BobCheating} we have
\begin{align*}
  \PA &\geq \half + \frac{1}{8} \Delta \\
  \PB &\geq \half + \frac{1}{16} F.
\end{align*}
It follows immediately from the Fuchs-van de Graaf inequalities \eqref{eq:FvD-states} that $F + \Delta \geq 4$, from which we obtain the lower bound
\[ 2 P^\star_\mathrm{Bob} + P^\star_\mathrm{Alice} \geq 2, \]
completing the proof of Theorem~\ref{thm:LowerBound} (Lower bound curve).
 
\section{Protocols arbitrarily close to the tradeoff curve} \label{sect:UpperBound}
 
In this section we exhibit a family of protocols whose cheating probabilities can be made arbitrarily close to any point on the tradeoff curve of Theorem \ref{thm:LowerBound} (Lower bound curve).
To this end we present an optimal protocol for each of the two extremes of our lower bound curve: one in which Bob cannot cheat at all {(\textit{i.e.} $\PB = {1}/{2}$)} and another in which Alice cannot cheat at all {(\textit{i.e.}  $\PA = {1}/{2}$)}.
We then observe that any point on the curve can be approximated to arbitrary precision by playing one of these two protocols according to the outcome of an appropriately unbalanced weak coin flipping protocol.
 
\subsection{A trivial optimal protocol in which Bob cannot cheat} \label{ssect:AliceProtocol}
 
By definition, the condition that Bob cannot cheat means that $\PB = 1/2$.
It then follows from Theorem \ref{thm:LowerBound} (Lower bound curve) that $\PA \geq 1$ and hence Alice must be able to cheat perfectly in any such protocol.
(This special case of Theorem \ref{thm:LowerBound} was first observed by Lo \cite{Lo97}.)

There is a trivial protocol with these properties.
It is a non-interactive classical protocol in which Bob simply sends both of his bits $x_0,x_1$ to Alice, who selects the bit she wishes to learn.
It is remarkable that an optimal protocol for semi-honest oblivious transfer can be obtained by mixing with such a trivial protocol.
 
\subsection{An optimal protocol in which Alice cannot cheat} \label{ssect:BobProtocol}
 
By definition, the condition that Alice cannot cheat means that $\PA=1/2$.
It then follows from Theorem \ref{thm:LowerBound} (Lower bound curve) that $\PB \geq 3/4$.
It just so happens that the protocol of Chailloux, Kerenidis, and Sikora has these properties \cite{ChaillouxKS13-QIC}.
That protocol is reproduced below.

\begin{protocol}[Chailloux, Kerenidis, Sikora \cite{ChaillouxKS13-QIC}] \label{BobProtocol}

\quad
  \begin{enumerate}

  \item \label{it:CKS13:state}
  Alice randomly chooses index $a \in \zo$ and prepares the two-qutrit state $\ket{\psi_a} := \frac{1}{\sqrt 2} \ket{aa} + \frac{1}{\sqrt 2} \ket{22}$. She sends the second qutrit to Bob.

  \item
  Bob randomly chooses $x_0, x_1 \in \zo$ and applies the unitary $\ket{0} \mapsto (-1)^{x_0} \ket{0}$, $\ket{1} \mapsto (-1)^{x_1} \ket{1}$, $\ket{2} \mapsto \ket{2}$. He returns the qutrit to Alice. The state Alice has is
  \[ \frac{(-1)^{x_a}}{\sqrt 2} \ket{aa} + \frac{1}{\sqrt 2} \ket{22}. \]

  \item \label{it:CKS13:measure}
  Alice has a two-outcome measurement (depending on $a$) to learn $x_a$ with certainty.

  \end{enumerate}

\end{protocol}

Those authors observed that Bob's maximum cheating probability for Protocol \ref{BobProtocol} is $3/4$ and that Bob can achieve this cheating probability without Alice knowing he cheated.
However, their protocol was designed for a different definition of security against cheating-Alice: they showed that cheating-Alice can guess both of Bob's bits with probability $3/4$, but there is no guarantee that she learns one of the bits with certainty.
 
We now argue that the additional requirement that Alice learns one bit with certainty implies that Alice's maximum cheating probability for Protocol \ref{BobProtocol} drops to $1/2$, so that Alice cannot cheat at all in this protocol under our present security definition.
Indeed, this claim is a special case of the following result.

\begin{proposition}[Alice cannot cheat in Protocol \ref{BobProtocol}] \label{lemma:BobProtocol}

  Let $\delta\geq 0$ and suppose cheating-Alice prepares a state $\ket{\psi}$ in step \ref{it:CKS13:state} of Protocol \ref{BobProtocol} that allows her to guess $x_a$ with probability at least $1-\delta$ in step \ref{it:CKS13:measure}.
  Then $\ket{\psi}$ allows her to guess $x_\ol{a}$ with probability at most $\frac{1}{2} + \sqrt{\delta(1-\delta)} + \delta$. 
  By setting $\delta=0$ we obtain $\PA=1/2$ in Protocol \ref{BobProtocol}.
\end{proposition}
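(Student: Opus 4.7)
The plan is to parameterize any cheating-Alice strategy by the qutrit-with-ancilla state $\ket{\psi} = \ket{\alpha_0}\ket{0} + \ket{\alpha_1}\ket{1} + \ket{\alpha_2}\ket{2}$ that she prepares in step \ref{it:CKS13:state}, where $\ket{\alpha_0},\ket{\alpha_1},\ket{\alpha_2}$ are subnormalized vectors in Alice's ancilla space and $\ket{0},\ket{1},\ket{2}$ denote the computational basis of the qutrit she sends to Bob. After Bob applies his phase unitary the joint state becomes
\[ \ket{\psi_{x_0,x_1}} = (-1)^{x_0}\ket{\alpha_0}\ket{0} + (-1)^{x_1}\ket{\alpha_1}\ket{1} + \ket{\alpha_2}\ket{2}. \]
Assume without loss of generality that $a=0$. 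Alice's best probability of guessing $x_0$ averaged over a uniformly random $x_1$ equals $\tfrac{1}{2} + \tfrac{1}{4}\tnorm{\sigma_0 - \sigma_1}$, where $\sigma_b = \tfrac{1}{2}\sum_{x_1}\kb{\psi_{b,x_1}}$. The first step is to expand $\sigma_0 - \sigma_1$: orthogonality of the three qutrit basis vectors together with the cancellation of terms odd in $(-1)^{x_1}$ after summing should reduce the difference to
\[ \sigma_0 - \sigma_1 = 2\bigl(\ket{\alpha_0}\ket{0}\bra{\alpha_2}\bra{2} + \ket{\alpha_2}\ket{2}\bra{\alpha_0}\bra{0}\bigr), \]
a rank-two Hermitian operator whose eigenvalues are $\pm 2\|\alpha_0\|\cdot\|\alpha_2\|$. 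Thus Alice's optimal guessing probability for $x_0$ is exactly $\tfrac{1}{2}+\|\alpha_0\|\cdot\|\alpha_2\|$, and a symmetric calculation (swapping the roles of the coordinates $\ket{0}$ and $\ket{1}$) gives her optimal guessing probability for $x_1$ as $\tfrac{1}{2}+\|\alpha_1\|\cdot\|\alpha_2\|$.

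Given these two clean formulas the proof reduces to a small constrained optimization. I would use the hypothesis to extract $\|\alpha_0\|\cdot\|\alpha_2\| \geq \tfrac{1}{2}-\delta$, and then maximize $\|\alpha_1\|\cdot\|\alpha_2\|$ subject to this constraint and the normalization $\sum_i\|\alpha_i\|^2 = 1$. Writing $p_i = \|\alpha_i\|^2$, one pins $p_0 p_2$ at its minimum $(\tfrac{1}{2}-\delta)^2$ and then optimizes $p_1 p_2 = (1 - p_0 - p_2)p_2$ over $p_2$; the maximum occurs at $p_2 = \tfrac{1}{2}$ and yields $p_1 p_2 \leq \tfrac{1}{4} - (\tfrac{1}{2}-\delta)^2 = \delta(1-\delta)$. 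Therefore $\|\alpha_1\|\cdot\|\alpha_2\| \leq \sqrt{\delta(1-\delta)}$, which is comfortably within the target $\tfrac{1}{2}+\sqrt{\delta(1-\delta)}+\delta$. Setting $\delta=0$ forces $p_0 = p_2 = \tfrac{1}{2}$ and hence $p_1=0$, recovering the promised $\PA = \tfrac{1}{2}$.

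The main obstacle is the bookkeeping in the expansion of $\kb{\psi_{b,x_1}}$: each outer product must be expanded into its nine cross terms, and one must carefully track which ones vanish under the two cancellations (summing over $x_1 \in \zo$ removes the terms carrying an odd power of $(-1)^{x_1}$, and subtracting $\sigma_1$ from $\sigma_0$ removes those not carrying a factor of $(-1)^{x_0}$). Once that reduction is in hand, everything else---recognizing the trace norm of the resulting rank-two operator, invoking the symmetry between $x_0$ and $x_1$, and carrying out the elementary optimization above---is straightforward.
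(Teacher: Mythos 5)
Your proof is correct, and its first two-thirds---parameterizing the cheat state as $\ket{\psi}=\sum_c\ket{\alpha_c}\ket{c}$, reducing $\sigma_0-\sigma_1$ to the rank-two cross term between the $\ket{0}$ and $\ket{2}$ components, and reading off the Helstrom probabilities $\tfrac12+\norm{\alpha_0}\norm{\alpha_2}$ and $\tfrac12+\norm{\alpha_1}\norm{\alpha_2}$---is exactly the computation in the paper (which writes $\ket{\alpha_c}$ as $\alpha\ket{e_0},\beta\ket{e_1},\gamma\ket{e_2}$). Where you diverge is the endgame. The paper derives two separate consequences of $\abs{\alpha\gamma}\geq\tfrac12-\delta$, namely $\abs{\beta}^2\leq 2\delta$ and $\abs{\gamma}\leq\sqrt{(1-\delta)/2}+\sqrt{\delta/2}$, and multiplies these bounds, yielding $\abs{\beta\gamma}\leq\sqrt{\delta(1-\delta)}+\delta$. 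You instead optimize the product $p_1p_2$ jointly subject to $p_0p_2\geq(\tfrac12-\delta)^2$ and $p_0+p_1+p_2=1$, which correctly gives $p_1p_2\leq\delta(1-\delta)$ and hence the strictly tighter conclusion $\Pr[\text{guess }x_{\ol a}]\leq\tfrac12+\sqrt{\delta(1-\delta)}$. This is a genuine (if small) improvement: it would sharpen the $\delta$-robustness discussion in Section \ref{sec:crazy-sec}, e.g.\ pushing out the value of $\delta$ at which the bound ceases to be useful. Both arguments agree at $\delta=0$, giving $\PA=1/2$. The only caveat worth stating explicitly in a final write-up is that the constraint $\norm{\alpha_0}\norm{\alpha_2}\geq\tfrac12-\delta$ is only meaningful for $\delta\leq 1/2$ (for larger $\delta$ the claimed bound exceeds $1$ and is vacuous), and that the feasibility of the optimizer $p_2=\tfrac12$, $p_0=2(\tfrac12-\delta)^2$ should be checked, which it is for all $\delta\in[0,1/2]$.
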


\begin{proof}
We prove only the case $a=0$, as the proof for $a=1$ is completely symmetric.
Any strategy for cheating-Alice in Protocol \ref{BobProtocol} begins by preparing a state $\ket{\psi}$ in step \ref{it:CKS13:state} of the form
\[ \ket{\psi} = \alpha \ket{e_0} \ket{0} + \beta \ket{e_1} \ket{1} + \gamma \ket{e_2} \ket{2} \]
where $\ket{e_0}$, $\ket{e_1}$, $\ket{e_2}$ are unit vectors (not necessarily orthogonal) and $|\alpha|^2 + |\beta|^2 + |\gamma|^2 = 1$.
After Bob applies his unitary, Alice has the state
\[ \ket{\psi_{x_0,x_1}} \defeq \alpha (-1)^{x_0} \ket{e_0} \ket{0} + \beta (-1)^{x_1} \ket{e_1} \ket{1} + \gamma \ket{e_2} \ket{2}. \]
Let $\rho_{x_0}$ denote the mixed state of Alice's system conditioned on Bob's first bit being $x_0\in\zo$, so that 
\[ \rho_{x_0} = \frac{1}{2} \Pa{ \altketbra{\psi_{x_0,0}} + \altketbra{\psi_{x_0,1}} }. \]
For brevity write $\ket{\vec{c}} = \ket{e_0}\ket{c}$ for $c = 0,1,2$ so that
\[
  \rho_{x_0} = \abs{\alpha}^2 \altketbra{\vec{0}} + \abs{\beta}^2 \altketbra{\vec{1}} + \abs{\gamma}^2 \altketbra{\vec{2}} + (-1)^{x_0} \alpha\gamma^* \ketbra{\vec{0}}{\vec{2}} + (-1)^{x_0} \alpha^*\gamma \ketbra{\vec{2}}{\vec{0}}.
\]
Then 
\[
  \Pr[\text{Alice correctly guesses $x_0$}] =
  \frac{1}{2} + \frac{1}{4} \Tnorm{\rho_0 - \rho_1} =
  \frac{1}{2} + \frac{1}{4} \Tnorm{2\alpha\gamma^* \ketbra{\vec{0}}{\vec{2}} + 2\alpha^*\gamma \ketbra{\vec{2}}{\vec{0}}} =
  \frac{1}{2} + \abs{\alpha\gamma}.
\]
A similar argument yields $\Pr[\text{Alice correctly guesses $x_1$}] = 1/2 + |\beta\gamma|$.
 
By assumption, cheating-Alice can guess $x_0$ with probability at least $1-\delta$ so that
\[
  \frac{1}{2} + \abs{\alpha\gamma} \geq 1 - \delta.
\]
Then
\[ 0 \leq \abs{\beta}^2 = 1 - \abs{\alpha}^2 - \abs{\gamma}^2 = 1 - (\abs{\alpha} - \abs{\gamma})^2 - 2\abs{\alpha\gamma} \leq 2\delta - (\abs{\alpha} - \abs{\gamma})^2 \]
from which we obtain the bounds
\begin{align}
  \abs{\beta}^2 &\leq 2\delta \label{eq:beta} \\
  (\abs{\alpha} - \abs{\gamma})^2 &\leq 2\delta. \label{eq:alpha-gamma}
\end{align}
We use \eqref{eq:alpha-gamma} to derive a bound on $\abs{\gamma}$.
We have
\[ \abs{\gamma}^2 = 1 -\abs{\alpha}^2-\abs{\beta}^2 \leq 1 - \Pa{ \abs{\gamma}-\sqrt{2\delta} }^2. \]
Solving this quadratic equation for $\abs{\gamma}$ we find
\begin{equation} \label{eq:gamma}
  \abs{\gamma} \leq \sqrt{\frac{1-\delta}{2}} + \sqrt{\frac{\delta}{2}}.
\end{equation}
Combining the bounds \eqref{eq:beta}, \eqref{eq:gamma} on $\beta,\gamma$, we obtain
\[
  \Pr[\text{Alice correctly guesses $x_1$}] =
  \frac{1}{2} + \abs{\beta\gamma} \leq
  \frac{1}{2} + \sqrt{2\delta}\cdot\Pa{\sqrt{\frac{1-\delta}{2}} + \sqrt{\frac{\delta}{2}}} = \frac{1}{2} + \sqrt{\delta(1-\delta)} + \delta
\]
as claimed.
\end{proof}
  
\subsection{Combining the protocols via weak coin flipping} \label{ssect:Protocol} 
 
As suggested at the beginning of this section, the two extreme protocols presented in Sections \ref{ssect:AliceProtocol} and \ref{ssect:BobProtocol} can be mixed according to the outcome of an unbalanced weak coin flipping protocol in order to yield a protocol that is arbitrarily close to any desired point on the tradeoff curve. 
In this subsection we explain informally how this protocol works.
In the subsequent subsection we present the technical details required to address concerns about the composability of the extreme protocols with weak coin flipping.

Mochon has shown that there exist quantum protocols for weak coin flipping with arbitrarily small bias \cite{Mochon07} (see also \cite{Pelchat13, ACGKM16}), meaning that for any $\epsilon>0$ and $\lambda=1/2$ we have
\begin{align*}
\Pr[\textup{The outcome is $0$ when Alice and Bob are honest}] &= \lambda, \\
\Pr[\textup{The outcome is $1$ when Alice and Bob are honest}] &= 1-\lambda, \\
\Pr[\textup{Cheating-Alice can force honest-Bob to output $0$}] &\leq \lambda + \epsilon, \\
\Pr[\textup{Cheating-Bob can force honest-Alice to output $1$}] &\leq 1-\lambda + \epsilon.
\end{align*}  
      
Chailloux and Kerenidis observed that Mochon's protocol can be repeated in order to yield a protocol with the above properties for any $\lambda\in[0,1]$ that can be expressed as an integer divided by a power of two \cite{ChaillouxK09}.
The set of such numbers is dense in the interval $[0,1]$, allowing us to design semi-honest oblivious transfer protocols approaching any point on the tradeoff curve.
Weak coin flipping protocols of this type shall be called \emph{$\lambda$-unbalanced}.

Let $m, n \in [1/2,1]$ be any numbers on the line $2n+m=2$ and expressible as an integer divided by a power of two.
For any $\epsilon>0$, we present a protocol for which Alice's and Bob's maximum cheating probabilities $\PA, \PB$ obey
\begin{align*}
\PA &\leq m+\epsilon/2 \\
\PB &\leq n+\epsilon/4
\end{align*}
so that $2 \PB + \PA \leq 2 + \epsilon$.
An informal statement of our final protocol follows.

\begin{protocol}[Optimal quantum semi-honest oblivious transfer protocol (informal)]
\label{prot:optimal}
\ \\
Let $\lambda \in [0,1]$ be such that $m=1/2+\lambda/2$ and $n=3/4-\lambda/4$.
  \begin{enumerate}
  \item Alice and Bob execute a $\lambda$-unbalanced protocol for weak coin flipping with bias $\epsilon$ to agree on a bit $c\in\zo$.
  \item If $c=0$, then execute the trivial oblivious transfer protocol of Section~\ref{ssect:AliceProtocol} with $\PB = 1/2$ and $\PA = 1$.
  \item If $c=1$, then execute Protocol~\ref{BobProtocol} with $\PA = 1/2$ and $\PB = 3/4$.
  \end{enumerate}
\end{protocol}

Intuitively, the security of Protocol \ref{prot:optimal} is implied by the following observations.
\begin{itemize}
\item
Alice can cheat more if she biases the coin flip toward $c=0$.
She can force this outcome with probability at most $\lambda+\epsilon$, in which case she can cheat with certainty.
Otherwise, she can cheat with probability $1/2$.
Therefore,
\[ \PA \leq (\lambda + \epsilon) + \Pa{1-\lambda - \epsilon}\cdot\frac{1}{2} = \frac{1}{2}+\frac{\lambda}{2}+\frac{\epsilon}{2} = m + \frac{\epsilon}{2}, \]
as claimed.
\item
Similarly, Bob can cheat more if he biases the coin flip toward $c=1$.
He can force this outcome with probability at most $1-\lambda+\epsilon$, in which case he can cheat with probability $3/4$.
Otherwise, he can cheat with probability $1/2$.
Therefore,
\[ \PB \leq \Pa{1-\lambda + \epsilon}\cdot\frac{3}{4} + \Pa{\lambda - \epsilon}\cdot\frac{1}{2} = \frac{3}{4}-\frac{\lambda}{4} + \frac{\epsilon}{4} = n + \dfrac{\epsilon}{4}, \]
as claimed.
\end{itemize}
  
\subsection{An optimal quantum protocol} 
 
We now present a more detailed description of our optimal protocol for semi-honest oblivious transfer so as to properly address concerns about the composability of weak coin flipping with other protocols.

\begin{protocol}[Optimal quantum semi-honest oblivious transfer protocol $(m,n,\epsilon)$]
\label{prot:fulloptimal}
\ \\
Let $\lambda \in [0,1]$ be such that $m=1/2+\lambda/2$ and $n=3/4-\lambda/4$.
  \begin{enumerate}
  \item Alice chooses $a$ and Bob chooses $(x_0, x_1)$ uniformly at random. (Actually, Alice could wait until the beginning of step \ref{it:Alice-start} to choose $a$ and Bob could wait until the beginning of step \ref{it:Bob-start} to choose $(x_0, x_1)$.)
  \item Alice and Bob execute a $\lambda$-unbalanced protocol for weak coin flipping with bias $\epsilon$ and do not measure their respective outcome registers $\cC_A$ and $\cC_B$ (leaving it in superposition). Then at the end of an honest protocol, they share the state
\[ \sqrt{\lambda} \ket{0}_{\cC_A} \ket{\xi_0}_{\cM_A \otimes \cM_B} \ket{0}_{\cC_B}
+ \sqrt{1 - \lambda} \ket{1}_{\cC_A} \ket{\xi_1}_{\cM_A \otimes \cM_B} \ket{1}_{\cC_B}, \]
where $\cM_A \otimes \cM_B$ are extra spaces for Alice and Bob, respectively. (We can assume the amplitudes are real by absorbing the phase into the definition of $\ket{\xi_0}$ and $\ket{\xi_1}$.) 
\item \label{it:Alice-start} Alice creates the two-qutrit state $\ket{\psi_a}_{\cA_1 \otimes \cA_2}$, where $\ket{\psi_a}$ is as defined in Protocol~\ref{BobProtocol}. Alice sends one qutrit to Bob:
\begin{itemize}
  \item Controlled on the outcome $c_A=0$, Alice sends a ``dummy'' qutrit $\ket{0}$.
  \item Controlled on the outcome $c_A=1$, Alice sends $\cA_2$.
\end{itemize}

\item \label{it:Bob-start} Bob creates the two-qutrit state $\ket{x_0, x_1}_{\cX_0 \otimes \cX_1}$. Bob sends two qutrits to Alice:
\begin{itemize}
  \item Controlled on the outcome $c_B=0$, Bob sends registers $\cX_0$ and $\cX_1$ to Alice.
  \item Controlled on the outcome $c_B=1$, Bob applies the unitary in Protocol~\ref{BobProtocol} to the qutrit sent by Alice and then returns it to her. He also sends a ``dummy'' qutrit $\ket{0}$ (so the message dimensions are consistent).
\end{itemize}
\item Alice and Bob measure to learn their outcomes of the weak coin flipping protocol. Alice measures to learn $x_a$:
\begin{itemize}
  \item If Alice's outcome $c_A$ is $0$, she measures $\cX_a$.
  \item If Alice's outcome $c_A$ is $1$, she measures $\cA_1 \otimes \cA_2$ as in Protocol~\ref{BobProtocol}.
\end{itemize}
\end{enumerate}
\end{protocol}

\begin{proposition}
Protocol~\ref{prot:fulloptimal} satisfies $\PA \le m + \eps/2$ and $\PB \le n + \eps/4$.
\end{proposition}

\begin{proof}
We now examine how Alice and Bob can cheat.

\paragraph{Cheating Alice}

If Alice cheats in the weak coin flipping protocol, they share a state of the form
\[ \sqrt{\tau} \ket{\xi'_0}_{\cM'_A \otimes \cM_B} \ket{0}_{\cC_B}  + \sqrt{1 - \tau} \ket{\xi'_1}_{\cM'_A \otimes \cM_B} \ket{1}_{\cC_B}, \]
where we have denoted $\cM'_A := \cC_A \otimes \cM_A$ here for brevity. Note that $\tau \leq \lambda + \epsilon$ by the definition of the weak coin flipping protocol.

Notice that Bob performs one of two operations after the weak coin flipping protocol subroutine, controlled on the value in the $\cC_B$ register which he then measures immediately after. Thus Alice's measurement statistics are not affected by whether Bob measures before or after the controlled-operation, so we assume he measures and obtains the bit $c_B$ which equals $0$ with probability $\tau$. Thus, with probability $1 - \tau$, Bob is going to perform the actions of Protocol~\ref{BobProtocol} and with probability $\tau$ he performs the actions of the trivial protocol.

After Alice's last message, call it the register $\cY$, they share a state
\[ \alpha \ket{e_0}_\spa{X}\ket{0}_{\spa{Y}} + \beta \ket{e_1}_\spa{X}\ket{1}_{\spa{Y}}  + \gamma \ket{e_2}_{\spa{X}} \ket{2}_{\spa{Y}} \]
which is independent of $x_0,x_1$ and the register $\spa{X}$ is shared between Alice and Bob. If $c_B = 0$, Alice can learn both bits with probability $1$ (assuming she knows $c_B = 0$).  If $c_B = 1$, then Alice can not learn any information about one bit after the other is learned with certainty, from Proposition~\ref{lemma:BobProtocol}. (Indeed Alice may have less  cheating power than guaranteed by Proposition~\ref{lemma:BobProtocol} since she does not have control over all of $\cX$.) Therefore, Alice can cheat with probability at most
\[ \tau + \half (1-\tau) = \half + \frac{\tau}{2} \leq \half + \frac{\lambda + \epsilon}{2} = m + \frac{\epsilon}{2}, \]
as desired.

\paragraph{Cheating Bob}
 
After the weak coin flipping protocol Alice and Bob share a state of the form
\[ \sqrt{1 - \tau} \ket{0}_{\cC_A} \ket{\xi'_0}_{\cM_A \otimes \cM'_B}  + \sqrt{\tau} \ket{1}_{\cC_A} \ket{\xi'_1}_{\cM_A \otimes \cM'_B}, \]
where we have denoted $\cM'_B := \cM_B \otimes \cC_B$ here for brevity. Note that $\tau \leq 1 - \lambda + \epsilon$ by the definition of the weak coin flipping protocol.

When $\cC_A$ is in the $\ket{0}$ state, Alice sends $\ket{0}$, when $\cC_A$ is in the $\ket{1}$ state, she sends the $\cA_2$ part of $\ket{\psi_a}$. Denote $\rho_a = \ptr{\cA_1}{\kb{\psi_a}}$.
Thus, after Alice's message, Bob's reduced state conditioned on the outcome $c_A$ given Alice measures $\cC_A$ is
\[ \mu_a := (1-\tau) \ptr{\cM_A}{\kb{\xi'_0}} \otimes \kb{0} + \tau \ptr{\cM_A}{\kb{\xi'_1}} \otimes \rho_a. \]
Since $a$ is chosen uniformly at random by Alice and since step \ref{it:Alice-start} is the only message from Alice with any information about $a$, Bob's optimal strategy is to perform the Helstrom measurement to try to learn $a$. Hence, Bob can cheat with probability
\[ \PB = \dfrac{1}{2} + \dfrac{1}{4} \tnorm{\mu_0 - \mu_1} = \dfrac{1}{2} + \dfrac{\tau}{4} \tnorm{\rho_0 - \rho_1} = \dfrac{1}{2} + \frac{\tau}{4} \leq \dfrac{1}{2} + \frac{1 - \lambda + \epsilon}{4} = n + \dfrac{\epsilon}{4}, \]
as desired.
\end{proof}
 
\section{Robustness of semi-honest oblivious transfer} \label{sec:crazy-sec}
 
The requirement that cheating-Alice guesses one of honest-Bob's input bits with certainty might seem overly demanding, as achieving certainty in the quantum world can be difficult if not impossible.
It is natural to wonder how our results are affected by relaxing this requirement so that cheating-Alice need only guess one of Bob's bits with probability $1-\delta$ for some $\delta\in[0,1/2]$.
A good security definition ought to be robust in the face of tiny perturbations such as this; in this subsection we provide some evidence in favour of semi-honest oblivious transfer.

Let us give this informal concept a name: \emph{$\delta$-robustness of semi-honest oblivious transfer}.
There are several distinct ways to formalize this concept, all of which are equivalent when $\delta=0$.
For example:
\begin{enumerate}

  \item \label{it:sec:condition}
    What is the maximum probability $p_1$ with which cheating-Alice can guess $x_\ol{a}$ conditioned on correctly guessing $x_a$?

  \item \label{it:sec:smarter}
    What is the maximum probability $p_2$ which which cheating-Alice can produce a pair $(x_0,x_1)$ such that $x_a$ is correct with probability at least $1-\delta$ and $x_\ol{a}$ is correct with probability $p_2$?

  \item \label{it:sec:stupid}
    Suppose cheating-Alice plays a strategy that allows her to guess $x_a$ with probability at least ${1-\delta}$ except that her final measurement is selected to guess $x_\ol{a}$ instead of $x_a$.
    What is the maximum probability $p_3$ with which she can succeed?

\end{enumerate}
It is clear that $p_1\leq p_2\leq p_3$.
Definitions \ref{it:sec:condition} and \ref{it:sec:smarter} are quite reasonable. 
Definition \ref{it:sec:stupid} seems less reasonable, as cheating-Alice produces a guess for either $x_a$ or $x_\ol{a}$, but not both.
Presumably, if cheating Alice wished to guess $x_\ol{a}$ then she would not begin by playing a strategy optimized for learning $x_a$. 
However, an upper bound for $p_3$ immediately yields an upper bound on the more reasonable quantities $p_1,p_2$.
Proposition \ref{lemma:BobProtocol} can be used to bound $p_3$ (and hence also $p_1,p_2$) by a continuous function of $\delta$.

For example, if we require cheating-Alice to be able to guess $x_a$ with probability $0.99$ (so that $\delta=0.01$) then by executing Protocol \ref{prot:fulloptimal} with $\lambda=0.219$ we find that the maximum cheating probability for either party can be made arbitrarily close to $0.695$, as compared to the $2/3$ maximum when $\delta=0$.
Proposition \ref{lemma:BobProtocol} ceases to be useful for this purpose at $\delta\approx 0.0443$, at which point $\lambda=0$ and there is no need for weak coin flipping since both cheating probabilities are then equal to $3/4$ in Protocol~\ref{BobProtocol}.
 
Naturally, the lower bound curve $2\PB+\PA\geq 2$ of Theorem \ref{thm:LowerBound} can only increase as $\delta$ increases (regardless of how one formalizes $\delta$-robustness) as cheating-Alice may choose from a larger set of strategies.
We leave it as an open question to find optimal bounds for these or other definitions of $\delta$-robust semi-honest oblivious transfer.
  
\section*{Acknowledgements}
 
Research at the Perimeter Institute is supported by the Government of Canada through Industry Canada and by the Province of Ontario through the Ministry of Research and Innovation.
GG also acknowledges support from CryptoWorks21. 
JS acknowledges support from a Government of Canada NSERC Postdoctoral Fellowship, the French National Research Agency (ANR-09-JCJC-0067-01), and the European Union (ERC project QCC 306537). Research at the Centre for Quantum Technologies at the National University of Singapore is partially funded by the Singapore Ministry of Education and the National Research Foundation, also through the Tier 3 Grant ``Random numbers from quantum processes,'' (MOE2012-T3-1-009).

\bibliography{bibliography}
\bibliographystyle{alpha}

\end{document}